\documentclass[aps,prd,twocolumn,superscriptaddress]{revtex4-2}

\usepackage{makecell} 
\usepackage{microtype}
\usepackage{graphicx}
\usepackage{dcolumn}
\usepackage{bm}
\usepackage{caption} 
\usepackage{mathtools} 
\usepackage{tensor} 
\usepackage[hidelinks]{hyperref} 
\usepackage{amssymb}
\usepackage{mathtools}
\usepackage{mathrsfs}
\usepackage{amsmath} 
\usepackage{xcolor}
\usepackage{array} 
\usepackage{booktabs}
\usepackage{subcaption}
\usepackage{enumitem}
\usepackage{soul}
\usepackage{amsthm}
\usepackage[toc,page]{appendix}

\newtheorem{theorem}{Theorem}

\begin{document}

\title{Generalized Ernst Potentials for arbitrary Dilatonic Theories}

\author{Leonel Bixano}
    \email{Contact author: leonel.delacruz@cinvestav.mx}
\author{Tonatiuh Matos}%
 \email{Contact author: tonatiuh.matos@cinvestav.mx}
\affiliation{Departamento de F\'{\i}sica, Centro de Investigaci\'on y de Estudios Avanzados del Intituto Politécnico Nacional, Av. Intituto Politécnico Nacional 2508, San Pedro Zacatenco, M\'exico 07360, CDMX.
}%

\date{\today}

\begin{abstract}
In this work, we generalize Ernst potentials to the Einstein-Maxwell-Dilaton case and explicitly write the corresponding potential space metric. Since this metric is five-dimensional in potential space, we generalize the corresponding Newman-Penrose coefficients for this metric and compare this formalism with previous approaches to show that this formulation is very convenient for analyzing these spacetimes and finding new exact solutions. We show how to obtain old exact solutions and some new ones with very interesting properties.
\end{abstract}

\maketitle

\section{Introduction}
Frederick J. Ernst first introduced what are now known as the Ernst potentials in his seminal papers \cite{Ernst:1967wx, Ernst:1967by}, providing a powerful framework for reformulating the problem of axially symmetric gravitational fields within the context of General Relativity. Ernst’s approach employs a pair of complex scalar potentials to simplify the search for exact solutions by recasting the Einstein field equations as a system of coupled complex partial differential equations that are technically more tractable.

The original Ernst formalism has since been substantially generalized to encompass a broader class of spacetimes, including electrically charged configurations and those with a non-vanishing cosmological constant. In particular, Astorino \cite{Astorino:2012zm} extended Ernst’s method to incorporate a cosmological constant, thereby enabling the systematic “charging” of axisymmetric spacetimes and the construction of more general solutions relevant to black holes with non-asymptotically flat geometries. Building on this line of research, Charmousis et al.\ \cite{Charmousis:2006fx} derived explicit solutions describing rotating spacetimes in the presence of a cosmological constant, thereby demonstrating how the Ernst potentials can be generalized and modified to investigate both realistic astrophysical configurations and rigorously defined theoretical models.

In this work, we derive the Ernst equations in the presence of a scalar field of dilatonic or phantom type coupled to the electromagnetic field, and within an arbitrary gravitational theory of interest, which will be specified in the following paragraphs. We generalize the equations originally obtained by Matos et al. and establish the consistency between the two formulations, noting that the reformulation proposed by Matos encompasses the original Ernst formalism.

We start by introducing the Lagrangian in the \textit{Einstein frame}, which will be employed throughout this work and follows the formulation used in previous studies such as \cite{Matos:2000za,Matos:2010pcd}.
\begin{equation}\label{Lagrangiano}
    \mathfrak{L}=\sqrt{-g}\bigg(-\mathcal R +2\epsilon_0 (\nabla \phi)^2 + e^{-2 \alpha_0 \phi } F^2 \bigg),
\end{equation}
here, \(g = \det(g_{\mu\nu})\) denotes the determinant of the metric tensor, \(\mathcal R\) is the Ricci scalar, and \(F^{2} = F^{\mu\nu} F_{\mu\nu}\) represents the standard Maxwell invariant. The field \(\phi\) is a scalar field, while \(\epsilon = \pm 1\) distinguishes between the dilaton scalar field and the phantom scalar field, respectively. The coupling constant \(\alpha_{0} \in \{0, 1, 3, \tfrac{1}{2\sqrt{3}}\}\) characterizes the specific theoretical framework under consideration; for instance, these values correspond, respectively, to the Einstein–Maxwell (EM) theory, low-energy effective superstring (SS) theory, Kaluza–Klein (KK) theory, and entanglement relativity (ER) \footnote{See \cite{Minazzoli:2025gyw,Minazzoli:2025nbi} .}.
The corresponding field equations are:
\begin{subequations}\label{EcuacionesDeCampoOriginales}
\begin{equation}\label{Eq:Campo1}
    \nabla_\mu \left( e^{-2\alpha_0 \phi} F^{\mu \nu} \right)=0,
\end{equation}
\begin{equation}\label{Eq:Campo2}
    \epsilon_0 \nabla^2 \phi+\frac{\alpha_0}{2}  \left( e^{-2\alpha_0 \phi} F^{2} \right)=0,
\end{equation}
{\small
\begin{equation}\label{Eq:Campo3}
    \mathcal R_{\mu \nu}=2 \epsilon_0 \nabla_\mu \phi \nabla_\nu \phi  + 2  e^{-2\alpha_0 \phi} \left( F_{\mu \sigma} \tensor{F}{_\nu}{^\sigma} -\frac{1}{4} g_{\mu \nu } F^2 \right),
\end{equation}
}
\end{subequations}

In this setting, we focus on a stationary and axisymmetric space-time. As a result, the geometry admits two Killing vectors, $Killing = \{\partial_t, \partial_\varphi\}$. Taking these symmetries into account, we can make use of the following Weyl anzat metric:
\begin{equation}\label{ds Cilindricas}
     ds^2=-f\left ( dt-\omega d \varphi \right )^2+f^{-1} \left ( e^{2k} (d\rho ^2 + dz^2) +\rho^2 d\varphi^2 \right ), 
\end{equation}
In this framework, the metric functions $\{f,\omega,\kappa\}$ depend on the coordinates $(\rho,z)$, and we adopt the axisymmetric four-potential ansatz $ A_{\mu} = \big[ A_t(\rho,z),\,0,\,0,\, A_{\varphi \, L}(\rho,z) \big]$.

The formulation in spheroidal coordinates is provided in Appendix \ref{Apendix:Coordinates protection}.

\section{Potentials method}
For the purposes of this work, it is convenient to introduce two mutually orthogonal operators (i.e., satisfying $D\Tilde{D}=0$) in order to simplify the subsequent calculations, namely:
\begin{equation}\label{OperadoresDDguiño}
    D\equiv \begin{bmatrix}
        \partial_\rho \\
        \partial_z
    \end{bmatrix}, \qquad 
    \Tilde{D}=
    \begin{bmatrix}
    \begin{array}{c}
    \partial_z \\
    -\partial_\rho
    \end{array}
    \end{bmatrix},
\end{equation}
which has the following properties
\begin{align*}
   &\tilde{\tilde{D}} =-D, \qquad  \Tilde{D}A\Tilde{D}B=DADB, \\
   &\Tilde{D}ADB=-\Tilde{D}BDA, \quad \Tilde{D}\Tilde{D}=DD=\partial_\rho ^2 +\partial_z^2,
\end{align*}
and we shall adopt the notation $DDf=D^2f$, and $Df Df=Df^2$.
Considering the operators (\ref{OperadoresDDguiño}), the definition $\kappa^2 := e^{-2\alpha_0 \phi}$, and the metric (\ref{ds Cilindricas}), and performing the rescaling $d\varphi \rightarrow d(L\, \varphi)$, which implies $A_{\varphi \, L} = L\, A_\varphi$, we can reformulate (\ref{EcuacionesDeCampoOriginales}) as follows:
{\small
\begin{subequations}\label{EcuacionesDeCampoV1}
\begin{center}
    \text{Klein-Gordon Equation:}
    \begin{multline}
        D^2\kappa +D\kappa \left( \frac{D\rho}{\rho} - \frac{D\kappa}{\kappa} \right) \\
        -\frac{f \kappa^3 \alpha_0^2 }{\rho^2 \epsilon_0 }  \left[ L^2(\frac{\omega}{L} DA_t +DA_\varphi)^2-\frac{\rho^2}{f^2} DA_t^2 \right] =0 \label{eq:KleinGordonV1} ,
    \end{multline}
    \text{Maxwell Equations:}
    \begin{align}
        D \left[ \frac{4 \kappa^2 f}{\rho} L (\frac{\omega}{L} DA_t +DA_\varphi) \right]&=0, \label{Eq:Maxwell1V1} \\ 
        D \left[ 4 \kappa^2 \left( \frac{f \omega}{\rho} L (\frac{\omega}{L} DA_t +DA_\varphi) -\frac{\rho}{f} DA_t \right) \right]&=0, \label{Eq:Maxwell2V1}
    \end{align}  
    \text{Einstein Equations:}
    \begin{multline}
        D^2f + Df\left[ \frac{D\rho}{\rho} - \frac{Df}{f} \right]+\frac{f^3}{\rho^2}D\omega^2 \\
        -\frac{2  \kappa^2 f^2}{\rho^2}\left[ L^2(\frac{\omega}{L} DA_t +DA_\varphi)^2 +\frac{\rho^2}{f^2} DA_t^2 \right]=0, \label{Eq:Einstein1V1}
    \end{multline}
    \begin{equation}
        D^2\omega - D\omega \left[ \frac{D\rho}{\rho} - \frac{2 Df}{f} \right]+\frac{4  \kappa^2}{f}  L (\frac{\omega}{L} DA_t +DA_\varphi) DA_t=0. \label{Eq:Einstein2V1}
    \end{equation}
\end{center}
\end{subequations}
}

It is important to emphasize that the constant associated with the scalar field sector and its underlying theory is only relevant within the framework of the Klein–Gordon equation. Several remarks are in order. First, defining $[EM]^{\nu} := \nabla_\mu \left( e^{-2\alpha_0 \phi} F^{\mu \nu} \right)$, equation (\ref{Eq:Maxwell1V1}) is derived from the $\varphi$-component $[EM]^{\varphi} = 0$, whereas equation (\ref{Eq:Maxwell2V1}) follows from the time component $[EM]^{t} = 0$. Second, defining $[EE]_{\mu \nu} := R_{\mu \nu} - 2 \epsilon_0 \nabla_\mu \phi \nabla_\nu \phi - 2 e^{-2\alpha_0 \phi} \left( F_{\mu \sigma} \tensor{F}{_\nu}{^\sigma} - \frac{1}{4} g_{\mu \nu} F^2 \right)$, equation (\ref{Eq:Einstein1V1}) is obtained from the $tt$-component $[EE]_{tt} = 0$, while equation (\ref{Eq:Einstein2V1}) is obtained from the linear combination $[EE]_{\varphi t} + (\omega/L )\,[EE]_{tt} = 0$.

The field equations (\ref{EcuacionesDeCampoV1}) are derived from the Euler–Lagrange equation
\begin{equation}\label{Eq:EulerLagrange}
    D\left(  \frac{\partial \mathfrak{L}}{\partial(DY^A)} \right)
    - \frac{\partial \mathfrak{L}}{\partial Y^A} = 0,
\end{equation}
applied to the Lagrangian density
\begin{multline}\label{LagrangianoTesis2}
    \mathfrak{L}
    = \frac{\rho}{2f^2} Df^2
      - \frac{f^2}{2\rho} D\omega^2
      + \frac{2 \rho \epsilon_0 }{\alpha_0^2 \kappa^2}D\kappa^2 \\
      + \frac{2f \kappa^2}{\rho} 
        \left[
            L^2\left(\frac{\omega}{L } DA_t + DA_\varphi\right)^2
            - \frac{\rho^2}{f^2} DA_t^2
        \right],
\end{multline}
with the set of dynamical variables given by $Y^{A} = \{f,\omega,A_t,A_\varphi,\kappa\}$.
\subsection{Definition of the potentials}
Employing equations (\ref{Eq:Maxwell1V1}) and (\ref{Eq:Einstein2V1}), one can introduce two correspondingly significant potentials
\begin{subequations}\label{DefinicionPotencialesGeneral}
\begin{align}
    &\Tilde{D}\chi=\frac{2f\kappa^2}{\rho}  L(\frac{\omega}{L} DA_t +DA_\varphi), \label{TDchi}\\
    &\Tilde{D} \epsilon = \frac{f^2}{\rho} D\omega + \psi \Tilde{D}\chi \label{TDepsilon},
\end{align}
\end{subequations}
i.e. the existence of these two potentials is equivalent to fulfilling the first Maxwell equation and the second Einstein equation, and where $\psi=2A_t$. 
Consequently, the corresponding \textbf{potentials} can be denoted by $Y^A$.
\begin{equation}\label{Potenciales}
    (Y^A)^T=\Big[f,\epsilon,\psi,\chi,\kappa \Big],
\end{equation}
which are respectively the gravitational, rotational, electric, magnetic and scalar potentials.
Substituting the definitions (\ref{DefinicionPotencialesGeneral}) into the field equations (\ref{EcuacionesDeCampoV1}), these reduce to
{\small
\begin{subequations}\label{EcuacionesDeCampoV2}
\begin{center}
    \text{Klein-Gordon equation}
    \begin{equation}
        D(\rho D\kappa ) - \frac{\rho }{\kappa} D\kappa^2+\frac{\rho \kappa^3 \alpha_0^2}{4 f \epsilon_0}  \left( D\psi^2-\frac{1}{\kappa^4 } D\chi^2 \right) =0 \label{eq:KleinGordonV2} ,
    \end{equation}
    \text{Maxwell equations}
    \begin{equation}\label{Eq:Maxwell1V2}
        D (\rho D\psi)+\rho D\psi \left( \frac{2D\kappa}{\kappa} -\frac{Df}{f} \right)  -\frac{\rho}{f\kappa^2 } (D\epsilon -\psi D\chi) D\chi =0 ,
    \end{equation}
    \begin{equation}\label{Eq:Maxwell2V2}
        D (\rho D\chi)-\rho D\chi \left( \frac{2D\kappa}{\kappa} +\frac{Df}{f} \right)  +\frac{\rho \kappa^2}{f} (D\epsilon -\psi D\chi) D\psi =0, 
    \end{equation}  
    \text{Einstein equations}
    \begin{multline}\label{Eq:Einstein1V2}
        D (\rho Df)+\frac{\rho}{f}\left( (D\epsilon-\psi D\chi )^2 -Df^2 \right)  \\
        -\frac{\rho \kappa^2}{2} \left( D\psi^2 +\frac{1}{\kappa^4 } D\chi^2 \right) = 0 ,
    \end{multline}
    \begin{equation}\label{Eq:Einstein2V2}
        D (\rho (D\epsilon-\psi D\chi)) - \frac{2 \rho }{f} (D\epsilon-\psi D\chi) Df =0. 
    \end{equation}  
\end{center}
\end{subequations}
}
Equations (\ref{Eq:Maxwell2V2}) and (\ref{Eq:Einstein2V2}) emerge from the property of analyticity of the metric functions $A_t$ and $\omega$, i.e. $\Tilde{D} D A_t=0$, and $\Tilde{D} D \omega=0$.

The previously introduced field equations in (\ref{EcuacionesDeCampoV2}) can be rederived from the Euler–Lagrange expression in (\ref{Eq:EulerLagrange}) by employing the following Lagrangian density:
\begin{multline}\label{LagrangianoTesis3}
    \mathfrak{L}=\frac{\rho}{2f^2}\left( Df^2 + (D\epsilon-\psi D\chi)^2\right) + \frac{2 \rho \epsilon_0 }{\alpha_0^2 \kappa^2}D\kappa^2 \\
    -\frac{\rho \kappa^2}{2f} \left( D\psi^2 +\frac{1}{\kappa^4 } D\chi^2 \right).
\end{multline}
From (\ref{LagrangianoTesis3}), one obtains the line element of the corresponding potential space:
\begin{multline}\label{ds Transformado}
    ds^2=\frac{1}{2f^2}\left( df^2 + (d\epsilon-\psi d\chi)^2\right) + \frac{2 \epsilon_0 }{\alpha_0^2 \kappa^2}d\kappa^2 \\
    -\frac{\kappa^2}{2f}  \left( d\psi^2 +\frac{1}{\kappa^4 } d\chi^2 \right),
\end{multline}
where the metric is expressed in terms of the potentials \(f\), \(\epsilon\), \(\psi\), \(\chi\), and \(\kappa\).
The space generated by the potentials (\ref{Potenciales}) is characterized by a constant scalar curvature given by the Ricci invariant
\begin{equation}\label{EscalarRicci}
    \mathcal R=-12-\frac{\alpha_0^2}{\epsilon_0}.
\end{equation}
Consequently, one can conclude that this potential space is maximally symmetric and conformally flat.
\subsection{Matos Anzat}
An alternative ansatz was proposed in \cite{Matos:2000za}. Consider the complex combination variable obtained from the differentials of the potentials $Y^A$. From the combination $DY^A$, we define:
\begin{subequations}\label{Variables ABC}
    \begin{equation}\label{VariableA}
        A=\frac{1}{2f}\left[ Df -i (D\epsilon - \psi D\chi) \right],
    \end{equation}
    \begin{equation}\label{VariableB}
        B=-\frac{1}{2\sqrt{f}}\left[  \kappa D\psi -i \frac{1}{  \kappa} D\chi \right],
    \end{equation}
    \begin{equation}\label{VariableC}
        C=-\frac{D\kappa}{\kappa}.
    \end{equation}
\end{subequations}
From equation (\ref{VariableB}) it follows that the quantity $-2\sqrt{f}\, B$ can be interpreted as an \textit{electromagnetic coupling with a scalar-field–dependent pseudo-force} involving $\kappa$. This interpretation is motivated by the classical description in which the force 1-form is given by $F = dV$, where $V$ denotes the potential and $d$ is the exterior derivative operator.

More precisely, $DY^A$ should not be interpreted as a physical force, rather, it is an abstract construct introduced to provide a conceptual interpretation of this quantity. The variable $C$ may be regarded as a pseudo-force generated by either a scalar dilatonic field or a phantom field, while the term $2f\,A$ is analogous to an \textit{Ernst pseudo-force}.
By employing the definitions given in (\ref{Variables ABC}), field equations (\ref{EcuacionesDeCampoV2}) can be recast in the form:
\begin{subequations}\label{EcuacionesDeCampo-ABC}
 \begin{center}
     \text{Field equations in terms of $\{ A,B,C\}$:}
 \end{center}
    \begin{equation}
        \frac{1}{\rho} D(\rho A)=B\overline{B} +A(A-\overline{A}),\label{EcuacionDeCampoA}
    \end{equation}
    \begin{equation}
        \frac{1}{\rho} D(\rho B)=-\frac{B}{2}(A-3\overline{A})+C\overline{B},\label{EcuacionDeCampoB}
    \end{equation}
    \begin{equation}
        \frac{1}{\rho} D(\rho C)=\frac{\alpha_0^2}{2\epsilon_0}(B^2+\overline{B}^2),\label{EcuacionDeCampoC}
    \end{equation}
\end{subequations}
\section{Relations with Ernst formulation}
This section aims to clarify the correspondence between the Ernst formulation and the extended reformulation \cite{Matos:1994hm,Matos:2000za,Matos:2010pcd}. By carefully examining the works \cite{Astorino:2012zm,Ernst:1967by,Ernst:1967wx}, the notation and variables can be consistently identified as follows:
\begin{align*}
    \omega &\quad \rightarrow \quad -\omega ,\\
    (\nabla F\, \,; \, \nabla^2F) &\quad \rightarrow \quad \Big( D\, F\, ;\, \frac{1}{\rho}D(\rho D F)  \Big),\\
    \nabla F \cdot \nabla G &\quad \rightarrow \quad DFDG ,\\
    \alpha e^{\Omega/2} &\quad \rightarrow \quad f ,\\
    \alpha &\quad \rightarrow \quad \rho ,\\
    e^{2\nu}/\sqrt{\alpha} &\quad \rightarrow \quad e^{2k}/f ,\\
    \hat{e}_{\varphi}\times\Vec{\nabla} F &\quad \rightarrow \quad \Tilde{D} F,\\
    A_0 &\quad \rightarrow \quad \psi/2 ,\\
\end{align*}
where \(F\) and \(G\) denote arbitrary variables.
Hence, by matching the corresponding differential equations and variables, we infer that this extended reformulation is related to the Ernst formulation in the following way:

\begin{subequations}\label{RelacionesErnstMatos}
\begin{center}
    \text{Ernst formulation to extended reformulation}
    \end{center}
    \begin{equation}\label{RelacionErnstChi}
        \Tilde{A}_3 \quad \rightarrow \quad \chi/2,
    \end{equation}
    \begin{equation}\label{RelacionErnstPotEMComplejo}
        2\Phi \equiv \psi + i\chi,
    \end{equation}
    \begin{equation}\label{RelacionErnstTwist}
        h \equiv \frac{1}{2} \psi \chi -\epsilon,
    \end{equation}
    {\small
    \begin{equation}\label{RelacionErnstPotencial}
        \varepsilon \quad \equiv f - \Phi \overline{\Phi} +ih= (f-i\epsilon)- \frac{1}{4}(\psi^2+\chi^2)+ \frac{i}{2} \psi \chi ,
    \end{equation}}
\end{subequations}

where $\varepsilon$ is the \textbf{Ernst potential}, $h$ is the Ernst twist potential, $\Phi$ is the Ernst electromagnetic complex potential.
\subsection{Interpretations of the variables A,B}
Assuming a constant scalar field with null coupling, $\kappa = 1$, vanishing constant parameter $\alpha_0=0$, and natural units, the relations in \eqref{Variables ABC} reduce to
{\small
\begin{equation*}
    2fA = D(f - i\epsilon) + i\psi\, D\chi, 
    \quad 
    -\sqrt{f}\, B = \frac{1}{2}D(\psi - i\chi), 
    \quad 
    C = 0.
\end{equation*}
}
Expressed in terms of the Ernst potentials, these relations become
\begin{align*}
    2fA 
    &= D\bigl(\varepsilon + \Phi \overline{\Phi}\bigr) 
      - \frac{i}{2}D(\psi \chi) 
      + i\psi\, D\chi \notag \\
    &= D\varepsilon + \Phi\, D\overline{\Phi} + \overline{\Phi}\, D\Phi 
       + \frac{i}{2}\bigl(\psi\, D\chi - \chi\, D\psi\bigr) \notag \\
    &= D\varepsilon + 2\,\overline{\Phi}\, D\Phi,
\end{align*}
where $ \overline{\Phi}\, D\Phi - \Phi\, D\overline{\Phi}=\frac{i}{2}\bigl(\psi\, D\chi - \chi\, D\psi\bigr)$, and
\begin{equation*}
    -\sqrt{f}\, B = D\overline{\Phi}.
\end{equation*}
Consequently, the explicit relations between the auxiliary variables $A, B$ and the Ernst potentials are given by
\begin{subequations}
    \begin{equation}\label{fErnst}
        f \equiv \mathfrak{Re}(\varepsilon) + \Phi\, \overline{\Phi},
    \end{equation}
    \begin{equation}\label{AenTerminosdeErnst}
        A \equiv \frac{1}{2f} \bigl[ D\varepsilon + 2\,\overline{\Phi}\, D\Phi \bigr],
    \end{equation}
    \begin{equation}\label{BenTerminosdeErnstsinkapa}
        B\big|_{\kappa=1} \equiv -\frac{1}{\sqrt{f}}\, D\overline{\Phi}.
    \end{equation}
\end{subequations}
\subsection{Generalized Ernst equations within extended reformulation}
To establish the equivalence between these two formulations, we derive the Ernst equations for the pair $(\varepsilon, \Phi)$ by starting from the generalized Ernst equations in the presence of an electromagnetic field coupled to a scalar field.
\subsubsection{Generalized Ernst equation for $\Phi$}
Using the definition \eqref{TDchi}, we obtain $\Tilde{D}\left(\frac{\chi}{2}\right)= \frac{\kappa^2 f}{\rho}\bigl(\omega D A_t + D A_\varphi\bigr)$.
Substituting this relation into \eqref{Eq:Maxwell2V1} yields
\begin{equation}\label{EcuacionParaErnstPhi1}
    D \bigg[ \omega \Tilde{D} \chi -\frac{ \rho \kappa^2}{f} D\psi \bigg]=0.
\end{equation}
Next, invoking the property $\Tilde{\Tilde{D}}=-D$ in the definition \eqref{TDchi}, we can solve for $\Tilde{D}A_\varphi$ and obtain
\begin{equation}\label{DAvarphi tilde}
    2 \Tilde{D} A_\varphi = -\bigg[ \frac{ \rho }{ \kappa^2 f} D\chi +\omega \Tilde{D} \psi \bigg].
\end{equation}
Then, applying the identity $D\Tilde{D}=0$ to \eqref{DAvarphi tilde}, we arrive at
\begin{equation}\label{EcuacionParaErnstPhi2}
    D \bigg[ \frac{ \rho }{ \kappa^2 f} D\chi +\omega \Tilde{D} \psi \bigg]=0.
\end{equation}
By multiplying equation \eqref{EcuacionParaErnstPhi2} by $i$ and subsequently adding the result to equation \eqref{EcuacionParaErnstPhi1}, we obtain
\begin{equation}\label{EcuacionParaErnstPhi3}
    D\bigg[ \omega \tilde{D}\Phi -\frac{i}{f} \rho \, B_L \bigg]=0,
\end{equation}
where we introduce the new variable
\begin{equation}\label{BLeonel}
    2B_L = \kappa^2  D\psi +\frac{i}{\kappa^2 }D \chi.
\end{equation}
If $\kappa=1 $ implies $ \Rightarrow \quad  B_L= D\Phi$

By performing straightforward manipulations, equation \eqref{EcuacionParaErnstPhi3} can be recast in the form
\begin{align*}
    &\frac{i f}{\rho} \bigg(
    D\omega\, \Tilde{D}\Phi
    - \frac{i}{f} D(\rho B_L)
    + \frac{\rho}{f^2} Df\, B_L
    \bigg) = 0 \\
    &\text{implies} \quad \frac{1}{\rho} D(\rho B_L)
    = \frac{D f}{f}\, B_L
    + i \frac{f}{\rho} \,\Tilde{D}\omega\, D\Phi,
\end{align*}
where $-\,\Tilde{D}\omega\, D\Phi=D\omega\, \Tilde{D}\Phi$.
Moreover, from \eqref{VariableA}, \eqref{TDepsilon}, and \eqref{RelacionesErnstMatos}, it follows that the following relations hold:
\begin{subequations}
    \begin{equation}\label{RelacionDf A}
        D f = f \,(A + \overline{A}),
    \end{equation}
    \begin{equation}
        \Tilde{D}\omega
        = \frac{\rho}{f^2} \left(\psi\, D\chi - D\epsilon \right),
    \end{equation}
    \begin{align}\label{RelacionDEpsilon psiDchi A}
        i\,\frac{f^2}{\rho}\,\Tilde{D}\omega
        &= -i\,(D\epsilon - \psi\, D\chi)
        = f\,(A - \overline{A}) \notag \\
        &= D\big(i h - \Phi \overline{\Phi}\big) + 2\,\overline{\Phi}\, D\Phi.
    \end{align}
\end{subequations}

Substituting these identities into the previous expression, we finally obtain \textbf{the generalized Ernst equation for the complex electromagnetic potential} $\Phi$:
\begin{equation}\label{ErnstGeneralizadaPHI}
    \frac{1}{\rho} D(\rho B_L)
= (B_L + D\Phi)\,A + (B_L - D\Phi)\,\overline{A},
\end{equation}
this equation expresses one of the main objectives of this paper.
By setting $\kappa=1$, which means excluding the scalar field, we achieve $D\Phi=B_L$. Substituting this value into \eqref{ErnstGeneralizadaPHI} yields:
\begin{align}\label{EcuacionErnstPhi}
    \frac{1}{\rho} D(\rho D\Phi)=2A D\Phi=\frac{D\varepsilon +2\overline{\Phi } D\Phi}{\mathfrak{Re}(\varepsilon) + \Phi \overline{\Phi}} D\Phi ,
\end{align}
this is the known Ernst equation for the variable $\Phi$.
\subsubsection{Generalized Ernst equations for $\varepsilon$}
To obtain the generalized Ernst equation for the variable $\varepsilon$, we will add the term $A(A+\overline{A})=A Df /f=\rho A Df /(\rho f)$ to both sides of \eqref{EcuacionDeCampoA}, then one can observe that:
\begin{align*}
    &A(A+\overline{A})+(A-\overline{A})+B \overline{B} =B\overline{B}+2A^2\\
    &= \frac{\rho A Df}{\rho f}+\frac{1}{\rho }D(\rho A)=\frac{1}{\rho f } D (\rho A D f).
\end{align*}
Consequently, the second objective of this study has been achieved and consists in deriving the \textbf{generalized Ernst equation for the Ernst potential} $\varepsilon$, which is given by:
\begin{equation}\label{ErnstGeneralizadaEpsilon}
    \frac{1}{\rho} D(\rho A f)= f \Big( B \overline{B} +2A^2 \Big),
\end{equation}
and it is precisely the same equation \eqref{EcuacionDeCampoA}, only written in a different representation.
By setting $\kappa=1$, and considering \eqref{EcuacionErnstPhi}, it is evident $fB \overline{B}=D\Phi D\overline{\Phi}$, thus
{\small
\begin{align}
    &\frac{1}{\rho} D(\rho A f)=\frac{1}{2\rho } D(\rho \varepsilon)+\frac{1}{\rho} D(\rho \overline{\Phi} D\Phi ) \notag \\
    &\qquad \qquad \quad=A(D \varepsilon +2 \overline{\Phi} D\Phi) +f \overline{B} B \notag \\
    &\therefore \quad \frac{1}{\rho}D(\rho \varepsilon)-2A D\varepsilon = 2\Big[ 2A  \overline{\Phi} D\Phi -\frac{1}{\rho} D(\overline{\Phi}  \, \, \rho D\Phi)  + f B \overline{B }\Big] \notag \\
    &\qquad \qquad \qquad \qquad \qquad \qquad =0 \notag \\
    &\text{implies} \qquad \frac{1}{\rho}D(\rho \varepsilon)= 2A D\varepsilon=\frac{D\varepsilon +2\overline{\Phi } D\Phi}{\mathfrak{Re}(\varepsilon) + \Phi \overline{\Phi}} D\varepsilon .
\end{align}
}
which is the well-known Ernst equation for the variable $\varepsilon$, and where $2A  \overline{\Phi} D\Phi -\frac{1}{\rho} D(\overline{\Phi}  \, \, \rho D\Phi)=-D\Phi D \overline{\Phi}$.
\section{Covariant form the field equations}
The equations (\ref{EcuacionesDeCampoV2}) can be expressed in a covariant form
\begin{equation}\label{EcuacionDeCampoCovariante}
    D(\rho DY^A)+\rho \tensor{\hat{\Gamma}}{^A}{_{BC}} DY^B DY^C=0,
\end{equation}
where $\tensor{\hat{\Gamma}}{^A}{_{BC}}$ represent the Christoffel symbols associated with the potential space constituted by $Y^A$. Equation \eqref{EcuacionDeCampoCovariante} corresponds to a geodesic equation defined on the potential space.

The adoption of the covariant formulation of the field equations preserves their invariant structure and permits the introduction of a general coordinate system $(\lambda,\tau)$. By incorporating the functional dependencies $\rho(\lambda,\tau)$ and $z(\lambda,\tau)$ into equation (\ref{EcuacionDeCampoCovariante}), one obtains
\begin{subequations}
\begin{equation}\label{Eq:GeodesicaYA}
    \tensor{Y}{^A}{_{,i:j}}+\tensor{\hat{\Gamma}}{^A}{_{BC}} \tensor{Y}{^B}{_{,i}} \tensor{Y}{^C}{_{,j}}=0,
\end{equation}
\begin{equation}\label{Eq:LaplaceGeneraLambda}
    D(\rho D\lambda^l)+\rho \tensor{\Gamma}{^l}{_{ij}} D\lambda^i D\lambda^j =0.
\end{equation}
\end{subequations}
where $\lambda^i=\{\lambda,\tau\}$, $Y_{,l}:=\partial Y/\partial \lambda^l$, and $\tensor{\Gamma}{^l}{_{ij}}$ denote the Christoffel symbols associated with the general subspace \textit{generated} by $\lambda$ and $\tau$. 
Equation (\ref{Eq:LaplaceGeneraLambda}) represents the \emph{Laplace equation in curved subspaces} and specifies the condition that the general subspace must satisfy.

By introducing the complex variable $\xi \equiv \lambda+i \tau $, one is able to perform summation $D\xi =D\lambda+iD\tau $ and $D \, \overline{\xi} =D\lambda-i D\tau $, thereby transforming the equation (\ref{Eq:LaplaceGeneraLambda}) into a new form:
\begin{subequations}\label{Eq:LaplaceXi}
\begin{align}
    \frac{1}{\rho}D(\rho D \xi) &= \frac{-2\sigma}{1-\sigma \xi \, \overline{\xi}} \, \overline{\xi} \, (D \xi )^2 , \\
    \frac{1}{\rho}D(\rho D \, \overline{\xi}) &= \frac{-2\sigma}{1-\sigma \xi \, \overline{\xi}} \, \xi \, (D \overline{\xi} )^2 ,
\end{align}
\end{subequations}
where
\begin{align*}
     \tensor{\Gamma}{^\lambda}{_{\lambda \lambda}} =-\tensor{\Gamma}{^\lambda}{_{\tau \tau}}=\tensor{\Gamma}{^\tau}{_{\lambda \tau}}=\frac{2\sigma}{1-\sigma \xi \, \overline{\xi}} \, \lambda, \\
     \tensor{\Gamma}{^\tau}{_{\tau \tau}} =-\tensor{\Gamma}{^\tau}{_{\lambda \lambda}}=\tensor{\Gamma}{^\lambda}{_{\lambda \tau}}=\frac{2\sigma}{1-\sigma \xi \, \overline{\xi}} \, \tau.
\end{align*}
\subsection{Carateristics of the general subspace}
Recalling that the potential space is maximally symmetric due to its constant curvature, it follows that the \textit{general subspace} generated by $\xi$ and $\overline{\xi}$ must also be maximally symmetric, implying that its curvature is constant. This condition leads to the following most general form of the metric:
\begin{equation}\label{MetricaEspacioGeneral}
    ds^2=\frac{2(d\lambda^2+d\tau^2)}{(1-\sigma [\lambda^2+\tau^2] )^2}=\frac{d\xi d \overline{\xi}}{(1-\sigma \xi \overline{\xi})}, 
\end{equation}
where the Ricci invariant $\mathcal R \propto \sigma$, that is, $\sigma$ characterizes the curvature of this general subspace.

It is important to emphasize that this \textbf{two-dimensional general subspace} $(\xi, \overline{\xi})$ does not coincide with the \textbf{physical space $(\rho,z)$}. For example, although $\sigma=0$ implies that the general subspace $(\xi, \overline{\xi})$ is flat, the corresponding physical subspace $(\rho,z)$ need not be flat.
\section{Extended field equations}
In this section, we accomplish the third and main objective of this paper, namely the generalization of the results of \cite{Matos:2000za,Matos:2010pcd} to the case of a dilaton/phantom scalar field coupled to the electromagnetic field, within an arbitrary underlying theory, while allowing for a non-flat geometry in the $\{\lambda,\tau\}$-space.
\subsection{Second Anzat}
Using the complex variables $\xi$ and $\overline{\xi}$, we introduce the following second ansatz, motivated by the earlier work of Matos:
\begin{subequations}\label{AnzatABC}
\begin{align}
    A&=\mathtt{x}_1(\xi, \overline{\xi}\,) D\xi+\mathtt{x}_2(\xi, \overline{\xi}\,) D\, \overline{\xi},\\
    B&=\mathtt{y}_1(\xi, \overline{\xi}\,) D\xi+\mathtt{y}_2(\xi, \overline{\xi}\,) D\, \overline{\xi},\\
    C&=\mathtt{z}_1(\xi, \overline{\xi}\,) D\xi+\mathtt{z}_2(\xi, \overline{\xi}\,) D\, \overline{\xi}.
\end{align}
\end{subequations}
Accordingly, expression \eqref{AnzatABC} can be written as $U = u_1 D\xi + u_2 D\,\overline{\xi}$, which leads to
\begin{align*}
    \frac{1}{\rho}D(\rho U)
    &=\frac{1}{\rho} \Big[ u_1 D(\rho D\xi)+ u_2 D(\rho D\, \overline{\xi} \,)\Big]\\
    &+\Big[ D\xi Du_1+ D \,\overline{\xi}\, Du_2 \Big] \\
    &= \Bigg[ u_1 \bigg(\ln \Big\{ (1-\sigma\,\xi\,\overline{\xi})^{2} u_1 \Big\}\bigg)_{,\xi} \Bigg] D\xi^2 \\
    &+ \Bigg[ u_2 \bigg(\ln \Big\{ (1-\sigma\,\xi\,\overline{\xi})^{2} u_2 \Big\}\bigg)_{,\overline{\xi}} \Bigg] D \overline{\xi}\, ^2 \\
    & \qquad \qquad +(u_{1,\overline{\xi}}+u_{2,\xi}) D\xi D\, \overline{\xi},
\end{align*}
so that we may substitute $\rho^{-1}D(\rho U)$ for each $U \in \{ A,B,C \}$, together with the ansatz \eqref{AnzatABC} into the field equations \eqref{EcuacionesDeCampo-ABC}. By comparing the coefficients of the independent quadratic forms $D\xi^2$, $D\, \overline{\xi} \,^2$, and $D\xi D\, \overline{\xi}$, we obtain
\begin{widetext}
\begin{subequations}\label{N_EcuacionesDeCampo-abc}
    \begin{center}
    \text{Expressions of the extended field equations in terms of $\{ \mathtt{x},\mathtt{y},\mathtt{z}\}$}
    \end{center}
    \begin{align}
        \mathtt{x}_1(\mathtt{x}_1-\overline{\mathtt{x}}_2)+\mathtt{y}_1\overline{\mathtt{y}}_2&=\mathtt{x}_1\bigg(\ln \Big\{ (1-\sigma\,\xi\,\overline{\xi})^{2} \mathtt{x}_1 \Big\}\bigg)_{,\xi}, \label{N_EC-a1} \\
        \mathtt{x}_2(\mathtt{x}_2-\overline{\mathtt{x}}_1)+\mathtt{y}_2\overline{\mathtt{y}}_1&=\mathtt{x}_2\bigg(\ln \Big\{ (1-\sigma\,\xi\,\overline{\xi})^{2} \mathtt{x}_2 \Big\}\bigg)_{,\overline{\xi}}, \label{N_EC-a2} \\
        \mathtt{x}_1(\mathtt{x}_2-\overline{\mathtt{x}}_1)+\mathtt{x}_2(\mathtt{x}_1-\overline{\mathtt{x}}_2)+(\mathtt{y}_1\overline{\mathtt{y}}_1+\mathtt{y}_2\overline{\mathtt{y}}_2)&=(\mathtt{x}_1)_{,\overline{\xi}}+(\mathtt{x}_2)_{,\xi}, \label{N_EC-a3}
    \end{align}
    \begin{align}
        \frac{\mathtt{y}_1}{2}(3\overline{\mathtt{x}}_2-\mathtt{x}_1)+\mathtt{z}_1\overline{\mathtt{y}}_2&=\mathtt{y}_1\bigg(\ln \Big\{ (1-\sigma\,\xi\,\overline{\xi})^{2} \mathtt{y}_1 \Big\}\bigg)_{,\xi}, \label{N_EC-b1} \\
        \frac{\mathtt{y}_2}{2}(3\overline{\mathtt{x}}_1-\mathtt{x}_2)+\mathtt{z}_2\overline{\mathtt{y}}_1&=\mathtt{y}_2\bigg(\ln \Big\{ (1-\sigma\,\xi\,\overline{\xi})^{2} \mathtt{y}_2 \Big\}\bigg)_{,\overline{\xi}}, \label{N_EC-b2} \\
        \frac{\mathtt{y}_1}{2}(3\overline{\mathtt{x}}_1-\mathtt{x}_2)+\frac{\mathtt{y}_2}{2}(3\overline{\mathtt{x}}_2-\mathtt{x}_1)+\mathtt{z}_1\overline{\mathtt{y}}_1+\mathtt{z}_2\overline{\mathtt{y}}_2&=(\mathtt{y}_1)_{,\overline{\xi}}+(\mathtt{y}_2)_{,\xi}, \label{N_EC-b3}
    \end{align}
    \begin{align}
        \frac{\alpha_0^2}{2\epsilon_0}(\tensor{\mathtt{y}}{_1}{^2}+\tensor{\overline{\mathtt{y}}}{_2}{^2})=\mathtt{z}_1\bigg(\ln \Big\{ (1-\sigma\,\xi\,\overline{\xi})^{2} \mathtt{z}_1 \Big\}\bigg)_{,\xi},\label{N_EC-c1} \\
        \frac{\alpha_0^2}{2\epsilon_0}(\tensor{\mathtt{y}}{_2}{^2}+\tensor{\overline{\mathtt{y}}}{_1}{^2})=\mathtt{z}_2\bigg(\ln \Big\{ (1-\sigma\,\xi\,\overline{\xi})^{2} \mathtt{z}_2 \Big\}\bigg)_{,\overline{\xi}},\label{N_EC-c2} \\
        \frac{\alpha_0^2}{\epsilon_0}(\mathtt{y}_1 \mathtt{y}_2+\overline{\mathtt{y}}_1\overline{\mathtt{y}}_2)=(\mathtt{z}_1)_{,\overline{\xi}}+(\mathtt{z}_2)_{,\xi},\label{N_EC-c3} 
    \end{align}
\end{subequations}
\end{widetext}
The comparison is valid because the terms $D\xi^2$, $D\,\overline{\xi}^{\,2}$, and $D\xi\,D\,\overline{\xi}$ are linearly independent. This independence follows from the fact that they are constructed from the independent basis vectors $\partial_\rho, \partial_z$ and the independent variables $(\xi,\overline{\xi})$. As a result, these terms constitute a new basis that spans a three-dimensional subspace. When one sets $\sigma = 0$, thereby restricting attention to an abstract flat subspace, the resulting system reduces to the equations presented presented in \cite{Matos:2000za,Matos:2010pcd}.

The relations between the potentials and the variables $\mathtt{x,y,z}$ transform to:
\begin{subequations}\label{N_RelacionesPotenciales-abc}
    \begin{center}
    \text{Relations between the potentials and $\{ \mathtt{x,y,z}\}$}
    \end{center}
    \text{\textbf{$\mathtt{x}$:}}
    \begin{align}
        \mathtt{x}_1=\frac{1}{2f}\big[ f_{,\xi} -i (\epsilon_{,\xi} - \psi \chi_{,\xi}) \big],\label{N_a1} \\
        \mathtt{x}_2=\frac{1}{2f}\big[ f_{,\overline{\xi}} -i (\epsilon_{,\overline{\xi}} - \psi \chi_{,\overline{\xi}}) \big],\label{N_a2} 
    \end{align}
    \text{\textbf{$\mathtt{y}$:}}
    \begin{align}
        \mathtt{y}_1=-\frac{1}{2\sqrt{f}}\big[  \, \kappa \, \psi_{,\xi} -\frac{i}{\,\kappa} \chi_{,\xi} \big],\label{N_b1} \\
        \mathtt{y}_2=-\frac{1}{2\sqrt{f}}\big[  \, \kappa \,  \psi_{,\overline{\xi}} -\frac{i}{  \,\kappa} \chi_{,\overline{\xi}} \big],\label{N_b2} 
    \end{align}
    \text{\textbf{$\mathtt{z}$:}}
    \begin{align}
        \mathtt{z}_1=-\frac{1}{\kappa} \kappa_{,\xi},\label{N_c1} \\
        \mathtt{z}_2=-\frac{1}{\kappa} \kappa_{,\overline{\xi}}\label{N_c2}.
    \end{align}
\end{subequations}
\paragraph{Real scalar field existence.} To guarantee the existence and smoothness of the scalar field defined by $\alpha_0\,\phi=-\ln{\kappa}$, we must impose the condition $\partial_\xi \partial_{\,\overline{\xi}}\,\phi=\partial_{\,\overline{\xi}}\partial_\xi \phi$, which is equivalent to $\partial_{\,\overline{\xi}}\,z_1=\partial_\xi\,z_2$. Moreover, if we choose $\phi$ to be a \emph{real} scalar field, we must require $\phi=\overline{\phi}$, which in turn implies $z_1=\overline{z_2}$.
\begin{theorem}
    In the non-flat $(\xi,\overline{\xi})$-space $\sigma \neq 0$ and using the anzat \eqref{AnzatABC} and within Einstein–Maxwell theory (with $\alpha_0 = 0$), no solutions exist that correspond to configurations with a real scalar field $\phi$ coupled to the electromagnetic field, nor do such solutions arise in any theory with $\sigma \neq 0$ in which the electromagnetic field is absent (i.e., $\mathtt{y} = 0$).
\end{theorem}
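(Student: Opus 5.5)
The plan is to work only with the scalar-sector equations \eqref{N_EC-c1}--\eqref{N_EC-c3}, together with the two constraints from the paragraph on real scalar field existence: integrability $\partial_{\overline{\xi}}\mathtt{z}_1=\partial_\xi\mathtt{z}_2$ and reality $\mathtt{z}_1=\overline{\mathtt{z}_2}$. In both situations of the statement the left-hand sides of \eqref{N_EC-c1}--\eqref{N_EC-c3} vanish identically. In the Einstein--Maxwell case this is because of the prefactor $\alpha_0^2/(2\epsilon_0)$ with $\alpha_0=0$. In the electromagnetically empty case it is because $\mathtt{y}_1=\mathtt{y}_2=0$. So in either case the scalar sector decouples and reduces to
\begin{equation*}
\Big((1-\sigma\xi\overline{\xi})^{2}\mathtt{z}_1\Big)_{,\xi}=0,\qquad \Big((1-\sigma\xi\overline{\xi})^{2}\mathtt{z}_2\Big)_{,\overline{\xi}}=0,\qquad (\mathtt{z}_1)_{,\overline{\xi}}+(\mathtt{z}_2)_{,\xi}=0 .
\end{equation*}
To pass from \eqref{N_EC-c1} to the first equation I multiply by $(1-\sigma\xi\overline{\xi})^2$, so that the logarithmic form becomes a total derivative; this form also covers the case $\mathtt{z}_1=0$.

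The first step combines the third equation with integrability. Together they give $(\mathtt{z}_1)_{,\overline{\xi}}=(\mathtt{z}_2)_{,\xi}=0$. Hence $\mathtt{z}_1=h(\xi)$ is a function of $\xi$ alone, and by reality $\mathtt{z}_2=\overline{h(\xi)}$. The first equation integrates to $(1-\sigma\xi\overline{\xi})^2h(\xi)=g(\overline{\xi})$ for some function $g$ of $\overline{\xi}$ only.

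The second step shows that this forces $h\equiv 0$ whenever $\sigma\neq0$. Suppose $h\not\equiv 0$. On an open set where $h\neq0$, take logarithms to get
\begin{equation*}
\ln h(\xi)+2\ln(1-\sigma\xi\overline{\xi})=\ln g(\overline{\xi}).
\end{equation*}
Applying the mixed derivative $\partial_\xi\partial_{\overline{\xi}}$ kills both $\ln h$ and $\ln g$. It leaves $-2\sigma/(1-\sigma\xi\overline{\xi})^2=0$, which contradicts $\sigma\neq0$. (Alternatively, I would differentiate the unlogged relation in $\overline{\xi}$ and compare powers of $\xi$.) Therefore $\mathtt{z}_1=\mathtt{z}_2=0$. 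By \eqref{N_c1}--\eqref{N_c2} this gives $\kappa_{,\xi}=\kappa_{,\overline{\xi}}=0$, so $\kappa$, and hence $\phi=-\alpha_0^{-1}\ln\kappa$ (or, directly, $C=0$), is constant. A constant $\phi$ is not a genuine scalar-field configuration, which proves both claims.

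The only delicate point I foresee is the step from ``$\mathtt{z}_1$ depends only on $\xi$'' to the contradiction: I must ensure the logarithm is taken on an open set where $h\neq0$, and then extend $h=0$ to all of the domain by analyticity/continuity. I also need to state explicitly that for $\alpha_0=0$ the field $\phi$ is read off from $C=-D\kappa/\kappa$, rather than via division by $\alpha_0$. The remaining steps are routine substitutions.
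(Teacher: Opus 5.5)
Your proof is correct and is essentially the paper's argument: integrability plus the third $\mathtt{z}$-equation give $\partial_{\overline{\xi}}\mathtt{z}_1=\partial_\xi\mathtt{z}_2=0$, the first equation is then incompatible with a nonvanishing $\mathtt{z}_1(\xi)$ when $\sigma\neq0$ (your $\partial_\xi\partial_{\overline{\xi}}$ of the logged relation is a cleaner way of stating the paper's observation that $2\sigma\overline{\xi}/(1-\sigma\xi\overline{\xi})$ depends on $\overline{\xi}$), and reality then forces $\mathtt{z}_2=0$. One correction to your closing remark: at $\alpha_0=0$ one has $\kappa\equiv1$ and $C=\alpha_0 D\phi\equiv0$ identically, so $C$ cannot be used to read off $\phi$ there (the paper's proof has the same degeneracy), and to make that clause nontrivial you should run the same argument on $D\phi$ itself, which at $\alpha_0=0$ obeys the same homogeneous equation $\rho^{-1}D(\rho D\phi)=0$.
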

\begin{proof}
    Let 
    \[
        S(\xi,\bar\xi)=(1-\sigma\,\xi\bar\xi)^2,
        \qquad \sigma\in\mathbb{R},\ \sigma\neq 0.
    \]
    and we know that $\kappa^2 = e^{-2\alpha_0 \, \phi}$, then $\mathtt{z}_1=\alpha_0 \, \phi_{,\xi}$ and $\mathtt{z}_2=\alpha_0 \, \phi_{,\overline{\xi}}$. 
    
    Suppose there exists a \emph{real} scalar potential $\phi(\xi,\bar\xi)\in\mathbb{R}$ of class $C^2$ such that
    \begin{equation}\label{Potencial real}
    z_1=\alpha_0\,\partial_\xi\phi,
    \qquad
    z_2=\alpha_0\,\partial_{\bar\xi}\phi,
    \qquad
    \alpha_0\in\mathbb{R}\ \text{constant}.
    \end{equation}
    
    Since $\phi$ is real, \eqref{Potencial real} implies the reality condition
    \begin{equation}\label{eq:reality}
    z_1=\overline{z_2}.
    \end{equation}
    Moreover, the very existence of $\phi$ implies the integrability identity
    \begin{equation}\label{eq:integrabilidad}
    \partial_{\bar\xi}z_1=\partial_{\bar\xi}\partial_\xi(\alpha_0\phi)
    =\partial_\xi\partial_{\bar\xi}(\alpha_0\phi)
    =\partial_\xi z_2,
    \end{equation}
    because mixed partial derivatives commute for $C^2$ functions.

    Under $y_1=y_2=0$ or $\alpha_0=0$, the block $Z$ of \eqref{N_EcuacionesDeCampo-abc} reduces to the system
    \begin{subequations}
    \begin{align}
    0&=\mathtt{z}_1\big(\ln(S \mathtt{z}_1)\big)_{,\xi}, \label{eq:Z1 0}\\
    0&=\mathtt{z}_2\big(\ln(S \mathtt{z}_2)\big)_{,\bar\xi}, \label{eq:Z2 0}\\
    0&=(\mathtt{z}_1)_{,\bar\xi}+(\mathtt{z}_2)_{,\xi}. \label{eq:Z3 0}
    \end{align}
    \end{subequations}

    Adding and subtracting \eqref{eq:Z3 0} with \eqref{eq:integrabilidad} we obtain
    \begin{equation}\label{eq:Z dependiente}
    \partial_{\bar\xi} \mathtt{z}_1=0,
    \qquad
    \partial_\xi \mathtt{z}_2=0.
    \end{equation}
    Thus $\mathtt{z}_1$ is holomorphic and $\mathtt{z}_2$ is anti-holomorphic:
    \[
    \mathtt{z}_1=\mathtt{z}_1(\xi),
    \qquad
    \mathtt{z}_2=\mathtt{z}_2(\bar\xi).
    \]

    Now take \eqref{eq:Z1 0}. On any open set where $\mathtt{z}_1\neq 0$, we may divide by $\mathtt{z}_1$ and conclude
    \begin{equation}\label{eq:log}
    \big(\ln(S \mathtt{z}_1)\big)_{,\xi}=0
    \quad\Longrightarrow\quad
    (\ln \mathtt{z}_1)_{,\xi}=-(\ln S)_{,\xi}.
    \end{equation}
    Since $S=(1-\sigma\xi\bar\xi)^2$, we compute
    \[
    (\ln S)_{,\xi}
    =
    2\left(\ln(1-\sigma\xi\bar\xi)\right)_{,\xi}
    =
    -\frac{2\sigma\bar\xi}{1-\sigma\xi\bar\xi}.
    \]
    Therefore \eqref{eq:log} becomes
    \begin{equation}\label{eq:contr}
    (\ln \mathtt{z}_1)_{,\xi}
    =
    \frac{2\sigma\bar\xi}{1-\sigma\xi\bar\xi}.
    \end{equation}

    \emph{Here is the contradiction:}
    By \eqref{eq:Z dependiente} we know that $z_1=z_1(\xi)$, so $(\ln \mathtt{z}_1)_{,\xi}$ depends solely on $\xi$. In contrast, the right-hand side of \eqref{eq:contr} depends explicitly on $\bar\xi$ whenever $\sigma\neq 0$. An identity of this form cannot hold on any open set. Thus, there can be no point at which $\mathtt{z}_1\neq 0$. By continuity of $\mathtt{z}_1$, we must therefore have
    \[
    \mathtt{z}_1\equiv 0.
    \]
    Finally, from \eqref{eq:reality} we obtain $\mathtt{z}_2=\overline{\mathtt{z}_1}\equiv 0$.

    This completes the proof of the theorem.

\end{proof}

\section{Known solution}\label{Known solution}

\subsection{Kerr solution}
In this section, we solve equations \eqref{N_EcuacionesDeCampo-abc} to derive the Kerr solution in a simple and transparent manner. We start by imposing $\mathtt{y}_1=\mathtt{y}_2=\mathtt{z}_1=\mathtt{z}_2$=0, meaning that we look for a configuration without a scalar field, without an electromagnetic field, and with a chiral ansatz $\mathtt{x}_2=0$ such that only $\mathtt{x}_1 \neq 0$ remains. Under these assumptions, equations \eqref{N_EcuacionesDeCampo-abc} reduce to the following form:
\begin{subequations}\label{Ec Dif de Kerr}
    \begin{align}
    &\bigg(\ln \Big[ \Delta^{2} \mathtt{x} \Big] \bigg)_{,\xi}=\mathtt{x} \label{PrimeraEcuacionKerr}, \\
        &\bigg(\ln\mathtt{x}\bigg)_{,\overline{\xi}}+\overline{\mathtt{x}}=0 \label{SegundaEcuacionKerr},
    \end{align}
\end{subequations}
where
\begin{equation}\label{eq:Funcion-Inv-Real_Delta}
    \Delta=(1-\sigma\,\xi\bar\xi),
        \qquad \sigma\in\mathbb{R},\ \sigma\neq 0.
\end{equation}
and 
\[
    \Delta_{,\xi}=-\sigma\, \overline{\xi}, \qquad \Delta_{,\xi}=-\sigma\, \xi . 
\]
Observe that $\partial_{\xi}=\partial_\lambda + i\partial_{\lambda}$ and $\partial_{\,\overline{\xi}}=\partial_\lambda - i\partial_{\lambda}$. Hence, $\overline{\partial_{\, \xi}}$ coincides with $\partial_{\, \overline{\xi}}$, which implies $\overline{\partial a}=\partial_{\,\overline{\xi}}\,\overline{a}$. Therefore, by taking the complex conjugate of equation (\ref{SegundaEcuacionKerr}), we obtain:
\begin{subequations}
\begin{align}
    &\bigg(\ln \Big[ \Delta^{2} \mathtt{x} \Big]\bigg)_{,\xi}=\mathtt{x}=-\bigg(\ln\overline{\mathtt{x}} \bigg)_{,\xi}, \notag \\
    & \Rightarrow \qquad \bigg(\ln \Big[ \Delta^{2} \mathtt{x} \, \overline{ \mathtt{x}} \, \Big]\bigg)_{,\xi}=0 \label{Condicion1Kerr}, \\
    & \therefore \qquad h(\overline{\xi})=\mathtt{x} \, \overline{\mathtt{x}} \, \Delta^2. \label{funcion h de kerr}
\end{align}
\end{subequations}
By inserting the expression from (\ref{funcion h de kerr}) into the conjugate of (\ref{Condicion1Kerr}), we clearly obtain that $h(\overline{\xi})=1:$ constant. Thus, we obtain the condition
\begin{align} \label{funcion h de kerr 1} 
    & 1=\mathtt{x} \, \overline{\mathtt{x}} \, \Delta^2, 
\end{align}
By substituting equation (\ref{funcion h de kerr 1}) into equation (\ref{SegundaEcuacionKerr}), we obtain $(\mathtt{x})_{,\overline{\xi}}=-1/\Delta^2$, thus
\begin{equation} \label{funcio g Kerr}
    \mathtt{x}=\frac{-1/(\sigma \, \xi)}{\Delta} +g(\xi).
\end{equation}
Now, by substituting equation (\ref{funcio g Kerr}) into (\ref{PrimeraEcuacionKerr}), we obtain a new differential equation for the function $g(\xi)$.
\begin{align}
    &\xi ^2 \sigma ^2 \Delta^2 (g'-g^2) +\sigma -1 \notag \\
    &+2 g \xi  \sigma \Delta \left( 1-\sigma^2 \, \xi \, \overline{\xi} \right)=0, \notag \\
    & \therefore \quad g(\xi)= \frac{1-\xi  \, \overline{\xi} \,  \sigma ^2}{\xi  \sigma \Delta}-\frac{1}{\xi +l_1} \label{funcion g1 Kerr},
\end{align}
where $l_1$ is an integration constant that correspond to the mass parameter of the compact object.
Now, using (\ref{funcion g1 Kerr}), we obtain the solution of (\ref{Ec Dif de Kerr}) corresponding to Kerr with $\sigma \neq 0$, this solution corresponds to the third class of solutions given in \cite{Matos:2000ai}.
\begin{equation}\label{xyzdeKerr}
    \mathtt{x}_{K}=\frac{(l_1 \, \sigma \, \overline{\xi}+1)/(\xi+l_1)}{(\sigma \, \xi \,\overline{\xi}-1)}.
\end{equation}
Finally, the constants relating the parameters are obtained from equation \eqref{SegundaEcuacionKerr}, which yields the relation
\begin{equation}
    l_1=\pm 1/\sqrt{\sigma}.
\end{equation}
Therefore, by setting $\mathtt{x}_2=\mathtt{y}_1=\mathtt{y}_2=\mathtt{z_1}=\mathtt{z}_2=0$, we find that the potentials $\psi, \chi, \kappa$ are constant, and the remaining equations \eqref{N_RelacionesPotenciales-abc} reduce to the form
    \begin{align*}
        \Big( \ln f \Big)_{, \,\xi} &= \mathtt{x}, \qquad \Big( \ln f \Big)_{, \, \overline{\xi}} =\overline{\mathtt{x}}, \\
         \epsilon_{, \,\xi} &=i\mathtt{x}f, \qquad \epsilon_{, \, \overline{\xi}} =-i \, \overline{\mathtt{x}} \, f,
\end{align*}
whose solutions are
\begin{subequations}\label{Potenciales de Kerr}
    \begin{align}
        &\psi_K= \chi_K=\kappa_K=1, \\
        &f_K=f_0 \, \frac{\sigma \,\xi \, \overline{\xi}-1}{(\xi+l_1)(\overline{\xi}+l_1)}, \label{fKerrPotencial}\\
        &\epsilon_K=i\, \frac{f_0}{l_1}\,\frac{\xi-\overline{\xi}}{(\xi+l_1)(\overline{\xi}+l_1)} \label{epsilonKerrPotencial}.
    \end{align}
\end{subequations}
If we employ the $\delta = 1$-Tomimatsu–Sato solution, given by $\xi = L x - i a y$, which satisfies equation \eqref{Eq:LaplaceXi} in prolate \footnote{Subject to the condition $|\mathcal{M}_\infty|^{2}>a^2+Q_{L}^{2}+H_{L}^{2}$, i.e. sub-extreme case, see Appendix \ref{Apendix:Coordinates protection}.} spheroidal coordinates ( $L_{-}\, x = r - l_1$ and $y = \cos\theta$), where $(r,\theta)$ denote the Boyer–Lindquist coordinates, then it is necessary to choose $f_0 = l_1^{\, 2}$ in order to recover the metric function $f$ corresponding to the Kerr solution in these coordinates. In summary, using $\xi =\xi_K= L x - i a y$ and $f_0= 1/\sigma=l_1^{\,\, 2}$, $f_K$ will be
\begin{equation*}
    f_{K}=\frac{r^2-2l_1 \,r +a^2 \cos^2{\theta}}{r^2+a^2 \cos^2{\theta}}.
\end{equation*}

\subsection{Kerr Newman Solution}
In this case, we again employ the chiral ansatz in the gravitational-twist sector with $\mathtt{x}_2=0$. However, we now impose this chiral ansatz in the electromagnetic sector by setting $\mathtt{y}_1=0$, and we take the scalar field to be constant, $\mathtt{z}_1=\mathtt{z}_2=0$, thereby decoupling the scalar field and working within the Einstein–Maxwell theory with $\alpha_0^2=0$. Under these assumptions, substituting into the equations \eqref{N_EcuacionesDeCampo-abc}
\begin{subequations}\label{EcuacionesDeKerrNewman}
    \begin{equation}\label{EcuacionXKerr1}
         \mathtt{x}_{,\xi}-\frac{2\sigma\overline{\xi}}{\Delta}\,\mathtt{x}=\mathtt{x}^2,
    \end{equation}
    \begin{equation}\label{EcuacionXKerr2}
         \mathtt{x}_{,\overline{\xi}}=\mathtt{y}\overline{\mathtt{y}}-\mathtt{x}\overline{\mathtt{x}},
    \end{equation}
    \begin{equation}\label{EcuacionYKerr1}
         \mathtt{y}_{,\overline{\xi}}-\frac{2\sigma\xi}{\Delta}\,\mathtt{y}=\frac{3}{2}\overline{\mathtt{x}}\,\mathtt{y}.
    \end{equation}
    \begin{equation}\label{EcuacionYKerr2}
         \mathtt{y}_{,\xi}=-\frac{\mathtt{x}}{2}\,\mathtt{y},
    \end{equation}
\end{subequations}
where $\Delta$ is defined in \eqref{eq:Funcion-Inv-Real_Delta}.

Equation \eqref{EcuacionXKerr1} is a standard Riccati equation, and its solution is given by 
\begin{equation}\label{eq:x_general_form}
\mathtt{x}(\xi,\overline{\xi})
=
\frac{-\sigma\overline{\xi}}{\Delta\big(1-\sigma\overline{\xi}\,F(\overline{\xi})\,\Delta\big)}.
\end{equation}

\paragraph{Canonical choice for obtaining a rational 1-soliton-type solution.}
A particularly convenient choice (in the sense that it yields a simple pole at $\xi = -l_1$) is
\begin{equation}\label{eq:F_choice}
F(\overline{\xi})
=
\frac{1}{\sigma\overline{\xi}\big(1+\sigma l_1\overline{\xi}\big)},
\qquad l_1\in\mathbb{R}.
\end{equation}
By substituting expression \eqref{eq:F_choice} into the general form given in \eqref{eq:x_general_form}, and subsequently carrying out a straightforward algebraic simplification, we obtain the explicit form of the $\mathtt{x}$ variable corresponding to the Kerr–Newman solution:
\begin{equation}\label{eq:x_KerrNewman}
\mathtt{x}_{KN}
=-
\frac{1+\sigma l_1\overline{\xi}}{(\xi+l_1)\,\Delta}.
\end{equation}
Using \eqref{eq:x_KerrNewman} into \eqref{EcuacionYKerr2}, we obtain the solution 
\begin{equation}\label{eq:y_intermediate}
\mathtt{y}(\xi,\overline{\xi})
=
g(\overline{\xi})\,
\sqrt{\frac{\sigma \, (\xi+l_1)}{(\sigma\,\xi\bar\xi-1)}},
\end{equation}
where $g(\overline{\xi})$ is an integration function that depends on $\overline{\xi}$. 

By substituting expressions \eqref{eq:y_intermediate} and \eqref{eq:x_KerrNewman} into the differential equation \eqref{EcuacionYKerr1}, we derive the following differential equation $\frac{3}{l_1+\overline{\xi}}+\frac{2g}{g}=0$, for the function $g$, whose solution is
\begin{equation}\label{eq:g_solution}
g(\overline{\xi})
=
\frac{Q_0}{(\overline{\xi}+l_1)^{3/2}},
\qquad
Q_0\in\mathbb{C},
\end{equation}
where we define $Q_0 = q + i p$ with $q,p \in \mathbb{R}$, interpreting $q$ as the electric charge and $p$ as the magnetic charge, both expressed in units of length.

We can thus ultimately derive the $\mathtt{y}$ variable corresponding to the \textbf{Kerr Newman} solution
\begin{equation}\label{eq:y_KerrNewman}
\mathtt{y}_{KN}
=
\frac{Q_0\,\sqrt{\sigma(\xi+l_1)}}{(\overline{\xi}+l_1)^{3/2}\sqrt{(\sigma\,\xi\bar\xi-1)}}.
\end{equation}
Taking \eqref{EcuacionXKerr2} into account, and upon substituting \eqref{eq:x_KerrNewman} and \eqref{eq:y_KerrNewman}, we obtain the following constraint on the parameter:
\begin{equation}\label{eq:ConstriccionParametros Kerr Newman}
\frac{1}{\sigma}=l_1^{\, \, 2}-|Q_0|^2.
\end{equation}
It illustrates how the mass, together with the electric and magnetic charges, distorts the curvature of the $(\xi,\overline{\xi})$-space.

By substituting $\mathtt{z}_1=\mathtt{z}_2=\mathtt{y}_1=\mathtt{x}_2=0$ into equations~\eqref{N_RelacionesPotenciales-abc}, we obtain the following relations, which determine the corresponding potentials of the Kerr–Newman solution
\begin{align*}
    \Big( \ln f \Big)_{, \,\xi} &= \mathtt{x}, \quad \Big( \ln f \Big)_{, \, \overline{\xi}} =\overline{\mathtt{x}}, \\
    -i\epsilon_{, \,\xi} &=\mathtt{x}f + \psi \sqrt{f} \,\overline{\mathtt{y}}, \\
    i\epsilon_{, \, \overline{\xi}} &=\overline{\mathtt{x}}f + \psi \sqrt{f} \,\mathtt{y}, \\
    -\sqrt{f}\, \mathtt{y}&=\psi_{,\overline{\xi}}=-i \chi_{,\overline{\xi}}, \\
    -\sqrt{f}\, \overline{\mathtt{y}}&=\psi_{,\xi}=i \chi_{,\xi}. \\
\end{align*}
whose solutions are
\begin{subequations}
\begin{align}
    &\kappa=1, \\
    f_{KN}&=\frac{f_0 \left(\sigma \, \xi \bar{\xi }-1\right)}{\left(\xi +l_1 \right) \left(\bar{\xi }+l_1\right)}, \\
    \psi_{KN}&=\psi_0+\sqrt{f_0 \, \sigma} \left( \frac{\overline{Q}_0}{\xi +l_1} +\frac{ Q_0}{\overline{\xi} +l_1}\right)  , \\
    \chi_{KN}&=\chi_0+i\sqrt{f_0 \, \sigma} \left( \frac{ Q_0}{\overline{\xi} +l_1}- \frac{\overline{Q}_0}{\xi +l_1}\right)  , \\
    \epsilon_{KN}&=\epsilon_1+\frac{i f_0 \, \sigma}{2}  \bigg( \frac{Q_0^{\,\, 2} -2l_1 (\xi+l_1)}{(\xi+l_1)^2} \notag \\
    &\qquad \qquad- \frac{\overline{Q}_0^{\,\, 2} -2l_1 (\overline{\xi}+l_1)}{(\overline{\xi}+l_1)^2} \bigg).
\end{align}
\end{subequations}
The relation \eqref{eq:ConstriccionParametros Kerr Newman} guarantees the existence of the $\epsilon$-potential. If this parameter constraint is not satisfied, $\epsilon$ is not integrable. Thus, to integrate ($\epsilon_{,\xi},\epsilon_{,\overline{\xi}}$), we must impose \eqref{eq:ConstriccionParametros Kerr Newman}.

Analogously to the Kerr case, to obtain \(f\) for the Kerr–Newman solution by applying \(\xi = Lx - iay\) in prolate spheroidal coordinates, we impose \(f_{0} = 1/\sigma\) and use \(\xi = \xi_{K}\), which yields
\begin{equation*}
    f_{KN}= \frac{r^{2}-2l_1\, r+|Q_0|^{2}+a^{2}\cos^{2}\theta}{r^{2}+a^{2}\cos^{2}\theta}.
\end{equation*}
\subsection{Kerr-Dilaton in Kaluza-Klein theory}
In 1995, in the paper \cite{Rasheed:1995zv}, Rasheed derives the rotating Kaluza–Klein black hole with a dilaton $\epsilon_0=+1$ of fixed coupling $\alpha_0=\sqrt3$ by taking the Kerr solution as a seed and exploiting the sigma-model symmetries that arise when stationary, axisymmetric 5D gravity is reduced to 4D. More precisely, he performs a restricted $SO(1,2)$ transformation consisting of two boosts characterized by parameters $\alpha,\beta$ and a rotation chosen so as not to induce NUT charge which turns the Kerr solution into one carrying both electric and magnetic charges. He then reconstructs the full 5D metric and, via the Kaluza–Klein reduction ansatz, extracts the corresponding 4D metric, Maxwell field, and dilaton. This procedure yields a closed form, explicit family of rotating dyonic black hole solutions in Kaluza–Klein theory.

In our notation, and within the sub-extreme regime, refer to Appendix \ref{Apendix:Coordinates protection} \footnote{See \cite{Bixano:2025bio,Bixano:2025qxp})}, the definition of spheroidal prolate ($-$) coordinates is given by $L_{-}\, x=r-l_1$, $y=\cos{\theta}$, and their relation to Weyl coordinates is specified by $\rho=L_{-}\sqrt{(x^2-1)(1-y^2)}, \, z=L_{-}\, xy$. 

Therefore, using \( \xi=Lx+iay,\qquad \bar\xi=Lx-iay,
\qquad
Lx=\frac{\xi+\bar\xi}{2},\qquad
y=\frac{\xi-\bar\xi}{2ia}, \) and the following definitions
{\small
\begin{equation}\label{eq:Fun-Inv_Z_Sigma}
    \begin{aligned}
        & r=l_1+Lx=l_1+\frac{\xi+\bar\xi}{2},\notag\\
        & \Sigma=r^2+a^2y^2=(\xi+l_1)(\bar\xi+l_1),\notag\\
        &\mathbf Z=\frac{2l_1 r}{\Sigma}=l_1\left(\frac{1}{\xi+l_1}+\frac{1}{\bar\xi+l_1}\right),\notag\\
        &\mathbf B=\sqrt{1+\frac{v^2}{1-v^2}\,Z}
        =\sqrt{1+\frac{v^2}{1-v^2}\,l_1\left(\frac{1}{\xi+l_1}+\frac{1}{\bar\xi+l_1}\right)}.\notag\\[2ex]
    \end{aligned}
\end{equation}
}
where $|v|<1$, we can reconstruct the potentials associated with the solution \cite{Rasheed:1995zv} using \eqref{FuncionesMetricas-xy}, obtaining:
{\small
\begin{equation}
    \begin{aligned}
        &f_{KD}=\frac{1-\mathbf Z}{\mathbf B}, \qquad \kappa_{KD}=\mathbf B^{3/2}, \\
        & \psi_{KD}=\frac{v}{1-v^2}\,\frac{\mathbf Z }{\mathbf B ^2}, \\
        & \chi_{KD}=\chi_0+\frac{l_1 v_b s}{i}\,\frac{\xi-\bar\xi}{\Sigma}, \\
        &\epsilon_{KD}=\epsilon_\infty+\frac{l_1 s}{i}\,\frac{\xi-\bar\xi}{\Sigma}.
    \end{aligned}
\end{equation}
}
\section{Other solution}

In 1994, in the article Ref.~\cite{Matos:1994hm}, Matos derived two solutions for specific values of $\alpha_0^2$, assuming a dilatonic-type scalar field with $\epsilon_0=+1$. The solution associated with \eqref{Potenciales de Kerr} is equivalent to matrix (20) in \cite{Matos:1994hm}, under the parametrization
\[
\lambda=\sqrt{\sigma}\,\xi, \qquad \tau=\sqrt{\sigma}\,\overline{\xi}, \qquad l_1=1/\sqrt{\sigma},
\]
with $\alpha_0^2=0$. Meanwhile, matrix (21) in \cite{Matos:1994hm} yields a different solution, we have used \eqref{eq:Funcion-Inv-Real_Delta}, \eqref{eq:Fun-Inv_Z_Sigma} and introducing two additional functions
\begin{equation}\label{eq:Funcion-Inv-PI}
    \begin{aligned}
        &\Pi=(\xi-l_1)(\bar\xi-l_1), \qquad \delta=\xi-\bar\xi,
    \end{aligned}
\end{equation}
to obtain its corresponding potentials:
{\small
\begin{equation}\label{eq:Nueva-Sol_Matos_21}
    \begin{aligned}
        &f^2 =\frac{\Delta^6}{d_\sigma\,\Sigma^4\,R_\sigma}, \qquad \kappa^{2/3}=-\frac{2}{\Delta}\sqrt{\frac{R_\sigma}{d_\sigma}}, \\
        &\epsilon = -\frac{b_\sigma}{d_\sigma}\,\frac{\delta^2}{\Sigma^2}, \qquad \chi =-2\sqrt2\,\frac{e_\sigma}{d_\sigma}\,\frac{\delta}{\Sigma}, \\
        &\psi =\frac{\sqrt2\,\delta}{4R_\sigma} \left(d_\sigma c_\sigma\,\Pi-\frac{b_\sigma e_\sigma}{\Sigma}\,\delta^2\right),
    \end{aligned}
\end{equation}
}
where we have set \( R_\sigma = d_\sigma f_0(\Delta^2 - 2\sigma\delta^2) - e_\sigma^2\delta^2 \), and
{\small
\[
    a_\sigma=\frac{a_0}{l_1^4},\quad
    b_\sigma=\frac{b_0}{l_1^2},\quad
    c_\sigma=\frac{c_0}{l_1^3},\quad
    d_\sigma=\frac{d_0}{l_1^4},\quad
    e_\sigma=\frac{e_0}{l_1^3}.
\]
}
The constants $a_0,b_0,c_0,d_0,e_0,f_0$ are constants whose values were adopted from \cite{Matos:1994hm}.

Starting from the Matos $SO(2,1)$ subclass of the chiral model $SL(3,\mathbb{R})/SO(2,1)$ presented in \cite{Matos:1994hm}, and making the identification $\lambda,\tau \quad \mapsto \quad \xi,\overline{\xi}$, we obtained the potentials \eqref{eq:Nueva-Sol_Matos_21}. If we consider the Kerr-Tomimatsu-Sato solution $\xi=Lx+iay$, we have
{\small
\begin{equation}
    \begin{aligned}
    &\Pi=(r-2l_1)^2+a^2y^2=(\xi-l_1)(\bar\xi-l_1),\\
    &\Delta=1-\sigma\,\xi\bar\xi=-\sigma\Big((L_{\pm}x)^2+a^2y^2-l_1{}^2 \Big)\\ 
    & \qquad \qquad =-\sigma \Big( r^2-2l_1\, r +a^2 y^2\Big),\\
    &\delta=2i ay=\xi-\bar\xi, \\
    &R_\sigma(x,y)=
    d_\sigma f_0\bigg[
    \sigma^2\Big(  r^2-2l_1\, r +a^2 y^2 \Big)^2 \\
    &\qquad \qquad  +8\sigma a^2y^2 \bigg] +4e_\sigma^{\,2}a^2y^2.
\end{aligned}
\end{equation}
}
Similarly, by imposing that the metric functions $f,\psi,\kappa$ be real and requiring the integrability of equation \eqref{EcuacionesDiferencialesOmega}, one possible choice is to enforce
{\small
\begin{equation}\label{eq:Condicion_Int_omega_Matos21}
    d_0=f_0=b_0\in\mathbb{R}, \quad e_0=\pm i \sqrt{2 d_0 f_0}, \quad c_0=2b_0 f_0 /e_0,
\end{equation}
}
thereby obtaining the dyonic branch $(A_t\neq 0,A_\varphi\neq 0)$ of the Bonnor dipole/black dihole family, originally identified in 1996 in \cite{Bonnor:1966dipole}, see also \cite{Lukacs:1977gw,Garcia-Compean:2015ywa,Emparan:1999au,Becerril:1990ek,Becerril:1990ek}. The corresponding metric functions with $f_0=1$ in our formulation are:
{\footnotesize
\begin{equation}
    \begin{aligned}
        f&=-\left(\frac{L^{2}x^{2}+a^{2}y^{2}-\,l_1^{2}}{\,l_1^{2}+2Ll_1x+L^{2}x^{2}+a^{2}y^{2}\,}\right)^{\!2},\\
        \omega&=\omega_0=0, \qquad \kappa=-2\sqrt{2},\\
        A_t&=\frac{\psi}{2}=\frac{a\,l_1\,y}{2\big(l_1^{2}+2Ll_1x+L^{2}x^{2}+a^{2}y^{2}\big)},\\
        A_\varphi &=-\frac{a\,l_1\,(l_1+Lx)\,(1-y^{2})}{2L^{2}\,\big(L^{2}x^{2}+a^{2}y^{2}-l_1^{2}\big)},\\
        e^{2k}&=\frac{\big(L^{2}x^{2}+a^{2}y^{2}-l_1^{2}\big)^{4}}{L^{8}\,(x^{2}-y^{2})^{4}}.
    \end{aligned}
\end{equation}
}

\section{Newman-Penrose formalism}\label{Newman-Penrose formalism} 
In this section we are going to analyze the geometric part of Ernst’s formulation and extended reformulation. Let the orthonormal coframe $\{\theta^A\}$  be
\begin{align*}
\theta^{0}&=\frac{d  f}{\sqrt{2}\,f}=\frac{1}{\sqrt{2}}d  (\ln{f}), 
&\theta^{1}=\frac{d \epsilon-\psi\,d \chi}{\sqrt{2}\,f},
\\
\theta^{2}&=\frac{\kappa}{\sqrt{2f}}\,d \psi, 
&\theta^{3}=\frac{1}{\kappa\sqrt{2f}}\,d \chi,
\\
\theta^{4}&=\frac{\sqrt2}{\alpha_0}\,\frac{d \kappa}{\kappa}=\frac{\sqrt{2}}{\alpha_0}d  (\ln{\kappa}),
\end{align*}
the metric \eqref{ds Transformado} becomes $d s_T^2=\eta_{AB}\,\theta^A\theta^B$, with $\eta_{AB}=\mathrm{diag}(+1,+1,-1,-1,\epsilon_0)$. However,when the five-dimensional analogue of the Newman–Penrose formalism is applied to the potential space, we obtain an interesting result
\begin{equation}\label{Base 1F Newman Penrose}
\begin{aligned}
\ell
&=\frac{1}{\sqrt2}\,(\theta^0+\theta^1)
=\frac{1}{2f}\Big(d  f+d \epsilon-\psi\,d \chi\Big),
\\
n
&=\frac{1}{\sqrt2}\,(\theta^0-\theta^1)
=\frac{1}{2f}\Big(d  f-d \epsilon+\psi\,d \chi\Big),
\\
m
&=\frac{1}{\sqrt2}\,(\theta^2+i \,\theta^3)
=\frac{1}{2\sqrt f}\Big(\kappa\,d \psi+i \,\kappa^{-1}d \chi\Big),
\\
\bar m
&=\frac{1}{\sqrt2}\,(\theta^2-i \,\theta^3)
=\frac{1}{2\sqrt f}\Big(\kappa\,d \psi-i \,\kappa^{-1}d \chi\Big),
\\
s
&=\theta^4
=\frac{\sqrt2}{\alpha_0}\,\frac{d \kappa}{\kappa}.
\end{aligned}
\end{equation}
Hence, the metric can be written as $ds^2 = 2\,\ell\,n - 2\,m\,\bar m + \epsilon_0\,s^2$. Consequently, the connection between \eqref{Base 1F Newman Penrose} and \eqref{Variables ABC} is expressed through the following relations
\begin{equation}
\begin{aligned}
A&=\frac12\Big[(1-i )\,\ell+(1+i )\,n\Big],
\\
\bar A&=\frac12\Big[(1+i )\,\ell+(1-i )\,n\Big], \\
    B&=-\,\bar m,
\qquad
\bar B=-\,m,
\\
C&=-\frac{\alpha_0}{\sqrt2}\,s.
\end{aligned}
\end{equation}
In other words, Ernst’s formulation is confined to the subspace spanned by $n,\ell,m,\bar m$ or, equivalently, by $A,B$. However, once a scalar field is introduced, the directions $m$ and $\bar m$ become deformed because of their interaction with the scalar field. Analogously, an additional fifth dimension $s$ appears, oriented along the direction defined by the \emph{scalar potential}, i.e. the extended reformulation lies in the complete 5 dimensions $n,\ell,m,\bar m,s$.

Similarly, while the one-form $A$ lies in the hyperplane spanned by $n,\ell$, that is, in the plane defined by the gravitational twist potentials, the one-form $B$ lies in the deformed electromagnetic direction $\bar m$.

Using the relation $d\theta^I + \omega^{I}{}_{J}\wedge\theta^J = 0,\quad \theta^I\in\{\ell, n, m, \bar m, s\}$, we can determine the associated connection one-forms in the absence of torsion given by
\begin{equation}\label{eq:1Formas conexion NewmanPenroseIndMixtos}
\begin{aligned}
\omega^{n}{}_{\ell}&=-n,
&
\omega^{\ell}{}_{n}&=-\ell,
\\
\omega^{n}{}_{m}&=i \,\bar m,
&
\omega^{\ell}{}_{\bar m}&=i \,m,
\\
\omega^{\bar m}{}_{\ell}&=-\frac12\,\bar m,
&
\omega^{\bar m}{}_{n}&=-\frac12\,\bar m,
\\
\omega^{m}{}_{\ell}&=-\frac12\,m,
&
\omega^{m}{}_{n}&=-\frac12\,m,
\\
\omega^{\bar m}{}_{s}&=\frac{\alpha_0}{\sqrt2}\,m,
&
\omega^{m}{}_{s}&=\frac{\alpha_0}{\sqrt2}\,\bar m,
\\
\omega^{s}{}_{m}&=\frac{\alpha_0}{\sqrt2\,\epsilon_0}\,m,
&
\omega^{s}{}_{\bar m}&=\frac{\alpha_0}{\sqrt2\,\epsilon_0}\,\bar m.
\end{aligned}
\end{equation}
the others that are absent here simply disappear. By employing the relation $\Omega^{I}{}_{J} = d\omega^{I}{}_{J} + \omega^{I}{}_{K} \wedge \omega^{K}{}_{J}$, we can determine the curvature 2-forms
\begin{equation}\label{eq:2Formas curvatura NewmanPenroseIndMixtos}
\begin{aligned}
\Omega^{n}{}_{\ell}
&=\ell\wedge n-\frac{i }{2}\,m\wedge\bar m,
\\
\Omega^{\ell}{}_{n}
&=-\ell\wedge n+\frac{i }{2}\,m\wedge\bar m,
\\
\Omega^{\bar m}{}_{\ell}
&=\frac14\,\ell\wedge\bar m-\frac14\,n\wedge\bar m+\frac{\alpha_0}{2\sqrt2}\,m\wedge s,
\\
\Omega^{\bar m}{}_{n}
&=-\frac14\,\ell\wedge\bar m+\frac14\,n\wedge\bar m+\frac{\alpha_0}{2\sqrt2}\,m\wedge s,
\\
\Omega^{m}{}_{\ell}
&=\frac14\,\ell\wedge m-\frac14\,n\wedge m+\frac{\alpha_0}{2\sqrt2}\,\bar m\wedge s,
\\
\Omega^{m}{}_{n}
&=-\frac14\,\ell\wedge m+\frac14\,n\wedge m+\frac{\alpha_0}{2\sqrt2}\,\bar m\wedge s,
\\
\Omega^{\bar m}{}_{s}
&=-\frac{\alpha_0}{2\sqrt2}(\ell+n)\wedge m-\frac{\alpha_0^2}{2}\,\bar m\wedge s,
\\
\Omega^{m}{}_{s}
&=-\frac{\alpha_0}{2\sqrt2}(\ell+n)\wedge \bar m-\frac{\alpha_0^2}{2}\,m\wedge s.
\end{aligned}
\end{equation}
the others that are absent here simply disappear. And the corresponding invariants are
{\small
\begin{equation}
    \begin{aligned}
    & \mathcal R=-12-\frac{\alpha_0^2}{\epsilon_0} \qquad \mathcal R_{AB} \mathcal R^{AB}=144+4\alpha_0^4 \\
    &\mathcal R_{ABCD}R^{ABCD}=192+12\alpha_0^4, \\
    &\mathcal C_{ABCD} \mathcal C^{ABCD}=96+\frac{16\alpha_0^2}{\epsilon_0}+\frac{22}{3}\,\alpha_0^4
\end{aligned}   
\end{equation}
}
where $\mathcal R$ is the Ricci scalar, $\mathcal R_{AB}$ is the Ricci tensor, $\mathcal R_{ABCD}$ is the Riemann tensor and $\mathcal C_{ABCD}$ is the Weyl tensor.
\section{Conclusions}   %
We extend Ernst’s formulation to include a scalar field coupled to the electromagnetic field in various theoretical frameworks and introduce the corresponding potential formulation originally used by Ernst. We also show the equivalence between the framework of Matos et al. \cite{Matos:1994hm,Matos:2000za,Matos:2010pcd}, defined on the space spanned by $n,\ell,m,\bar m,s$ or $A,B,C$, and Ernst’s formalism \cite{Astorino:2012zm,Ernst:1967by,Ernst:1967wx}, restricted to the subspace spanned by $n,\ell,m,\bar m$ or, equivalently, $A,B$.

A key outcome is a Newman–Penrose-type formalism on the potential target space, adapted to the five-dimensional potential metric. In this framework, the Einstein–Maxwell–Scalar-Field equations become geodesic equations on the potential space, enabling analysis of geodesic completeness and identification of geodesically complete subspaces. This NP-like viewpoint guides the construction of suitable ansätze for exact solutions.

We showed that the formalism systematically reproduces standard solutions such as Kerr and Kerr–Newman and naturally includes the rotating Kaluza–Klein dilatonic black-hole sector (Rasheed-type solutions) within the same potential-space framework. We also established a precise correspondence between the solution in \cite{Matos:1994hm} and the classical Bonnor solution in a specific parameter region, thereby clearly linking chiral/matrix constructions to earlier exact-solution families.

When $\alpha_0^2 = 3$ (KK with $\epsilon_0 = +1$), the target space is the symmetric coset $SL(3,\mathbb R)/SO(2,1)$. The Killing algebra is then the full $\mathfrak{sl}(3,\mathbb R)$ of dimension 8, rather than the 5-dimensional solvable subalgebra. This introduces genuinely non-gauge generators, allowing new solutions to be generated from a seed via Ehlers/Harrison transformations, which mix $(f,\epsilon,\psi,\chi,\kappa)$ nontrivially and thus also $(A,B,C)$.

For $\alpha_0^2 \neq 3$ (or more generally $\alpha_0^2 \neq 0$), the symmetry group is smaller and consists only of:

Gravitational dilatation:
\(
f\mapsto \lambda^2 f,\quad
\epsilon\mapsto \lambda^2 \epsilon,\quad
\psi\mapsto \lambda\,\psi,\quad
\chi\mapsto \lambda\,\chi,\quad
\kappa\mapsto \kappa, \quad 0<\lambda \in \mathbb{R}.
\)

Scalar field dilatation:
\(
\kappa\mapsto \mu\,\kappa,
\quad
\psi\mapsto \mu^{-1}\psi,
\quad
\chi\mapsto \mu\,\chi,
\quad
f,\epsilon\ \text{unchanged}, \quad 0<\mu \in \mathbb{R}.
\)

Electromagnetic duality:
\(
\psi\mapsto -\psi,\quad
\chi\mapsto -\chi,\quad
f,\epsilon,\kappa\ \text{unchanged}.
\)

And the three-dimensional Heisenberg algebra generated by
\(
K_\epsilon=\partial_\epsilon,
\quad
K_\chi=\partial_\chi,
\quad
K_\psi=\partial_\psi+\chi\,\partial_\epsilon,
\)
with
\(
[K_\psi,K_\chi]=K_\epsilon,
\quad
[K_\epsilon,\cdot]=0.
\)

\section{Acknowledgements}

LB thanks SECIHTI-M\'exico for the doctoral grant.
This work was also partially supported by SECIHTI M\'exico under grants SECIHTI CBF-2025-G-1720 and CBF-2025-G-176. 


\begin{appendices}
\section{Coordinates protection}\label{Apendix:Coordinates protection}
 The Weyl anzat metric \eqref{ds Cilindricas} expressed in spheroidal Oblates($+$)/Prolates($-$) coordinates $(x, y)$ is defined as
\begin{multline}\label{ds sp}
    ds^2 = -f\left( dt-\omega d \varphi \right)^2
     + \frac{(L_{\pm})^2}{f} \bigg( (x^2\pm1)(1-y^2) d\varphi^2 \\
     +(x^2\pm y^2) e^{2k} \left\{ \frac{dx^2}{x^2\pm 1} +\frac{dy^2}{1-y^2} \right\} \bigg).
\end{multline}
where we have substituted $L$ by $L_{\pm}$, and the Weyl coordinates are related to $(x, y)$ through the following relation
\begin{equation}\label{WeylSpheroidalCordiantes}
\rho=(L_{\pm})\sqrt{(x^2\pm1)(1-y^2)} , \qquad z=(L_{\pm})xy, 
\end{equation}
and $\rho \in [0,\infty)$, $\{ z ,x \}\in \mathbb{R}$, $y \in [-1,1]$, $(L_{\pm}) \geq 0$.

We are considering two scenarios

\paragraph{Sub-extreme (S-E: lower sign $-$) condition}:
\begin{align}\label{Sub-Extreme}
|\mathcal{M}_\infty|^{2}&>a^2+Q_{L}^{2}+H_{L}^{2} ,\\
|\mathcal{M}_\infty|^{2}&=L_{-}^2+a^2+Q_{L}^{2}+H_{L}^{2},
\end{align}

\paragraph{Super-extreme (SU-E: upper sign $+$) condition}:
\begin{align}\label{Super-Extreme}
|\mathcal{M}_\infty|^{2}&<a^2+Q_{L}^{2}+H_{L}^{2} ,\\
L_{+}^2+|\mathcal{M}_\infty|^{2}&=a^2+Q_{L}^{2}+H_{L}^{2}.
\end{align}
where \(a = J_\infty / |\mathcal{M}_\infty|\) denotes the angular momentum per unit effective mass, with \(\mathcal{M}_\infty = l_1 + i\,N_\infty\), \(l_1\) a length-scale parameter, and \(Q_L = Q_\infty\), \(H_L = H_\infty\) the electric and magnetic geometric charges, respectively, while \(N_\infty\) is the NUT parameter. The quantities \(Q_\infty, H_\infty, N_\infty, J_\infty\) are conserved invariant charges, defined in the standard classical sense as in \cite{Komar:1958wp,Nedkova:2011hx,Clement:2015aka,Clement:2022pjr,BallonBordo:2019vrn,Misner:1963fr,Manko:2005nm}. In general, the Komar mass \(M_\infty\) coincides with \(l_1\).
Finally, the Boyer-Lindquist coordinates $(r,\theta)$ are related to the previous coordinates as
\begin{equation}\label{BoyerLindquistSpheroidalCordiantes}
    (L_{\pm}) x=r-l_1, \qquad y=\cos{\theta},
\end{equation}
where $r \in (-\infty,-l_1]\cup[l_1,\infty)$, $\theta\in [0,\pi]$, and the variable $l_1=r_s/2$, where $r_s$ represents the Schwarzschild radius.
\subsection{Potential differential equations}
Using the relation \eqref{WeylSpheroidalCordiantes}, the potential equations \eqref{DefinicionPotencialesGeneral} expressed in spheroidal coordinates takes the following form:
{\small
\begin{subequations}\label{FuncionesMetricas-xy}
\begin{center}
    \text{\textbf{Differential equation for $\omega$}}
    \begin{equation}\label{EcuacionesDiferencialesOmega}
        \begin{bmatrix}
            \partial_x \\
            \partial_y \\
        \end{bmatrix} 
        \omega= \frac{L}{f^2}\Big( \begin{bmatrix}
            (1-y^2)\partial_y  \\
            -(x^2\pm1)\partial_x  
        \end{bmatrix}\epsilon -\psi \begin{bmatrix}
            (1-y^2)\partial_y  \\
            -(x^2\pm1)\partial_x  
        \end{bmatrix}\chi \Big) ,
    \end{equation}

    \text{\textbf{Differential equation for $A_\varphi$}}
    \begin{equation}\label{EcuacionesDiferencialesA3}
        \begin{bmatrix}
            \partial_x \\
            \partial_y \\
        \end{bmatrix} 
        A_\varphi = \frac{1}{2  Lf \kappa^2 } \begin{bmatrix}
            (1-y^2)\partial_y  \\
            -(x^2\pm1)\partial_x  
        \end{bmatrix} \chi
        -\frac{\omega}{2L} 
        \begin{bmatrix}
            \partial_x \\
            \partial_y \\
        \end{bmatrix} \psi.
    \end{equation}
    \end{center}
\end{subequations}
}
And the metric function $k$ is given by the following quadrature
{\footnotesize
\begin{subequations}\label{kVariableCompleja}
\begin{align}
    k_{,\zeta}=& \frac{\rho}{4f^2} \bigg[ \tensor{f}{_{,\zeta}}{^2}+(\tensor{\epsilon}{_{,\zeta}} -\psi \tensor{\chi}{_{,\zeta}})^2+f \left( \frac{1}{ \kappa^2} \tensor{\chi}{_{,\zeta}}{^2}+ \kappa^2 \tensor{\psi}{_{,\zeta}}{^2} \right) \nonumber \\
    &+\epsilon_0 \left(  \frac{2f}{\alpha_0\kappa}\right)^2 \tensor{\kappa}{_{,\zeta}}{^2} \bigg],\label{k Chi}
\end{align}
\begin{align}
    k_{,\overline{\zeta}}=& \frac{\rho}{4f^2} \bigg[ \tensor{f}{_{,\overline{\zeta}}}{^2}+(\tensor{\epsilon}{_{,\overline{\zeta}}} -\psi \tensor{\chi}{_{,\overline{\zeta}}})^2+f \left( \frac{1}{ \kappa^2} \tensor{\chi}{_{,\overline{\zeta}}}{^2}+ \kappa^2 \tensor{\psi}{_{,\overline{\zeta}}}{^2} \right) \nonumber \\
    &+\epsilon_0 \left(  \frac{2f}{\alpha_0\kappa}\right)^2 \tensor{\kappa}{_{,\overline{\zeta}}}{^2} \bigg].\label{k complejaconjugada}
\end{align}
\end{subequations}
}
where $\zeta=\rho +i z$.

It is important to note that each partial derivative quadratic individual contribution term $\mathtt{F}_{,\zeta}$ or $\mathtt{F}_{,\overline \zeta}$ with $\mathtt F \in \{f,\epsilon,\psi,\chi,\kappa \}$ takes the following form in spheroidal coordinates:
{\small
\begin{subequations}\label{kCadaTerminoCuadradito}
\begin{align}
    &\partial_x k \propto \frac{1}{4f^2} \frac{(1-y^2)}{x^2\pm y^2} \bigg\{ x\Big[ (x^2\pm1)(\partial_x \mathtt F)^2-(1-y^2) (\partial_y \mathtt F)^2  \Big] \nonumber \\
    & \qquad -2y (x^2\pm1)(\partial_x \mathtt F)(\partial_y \mathtt F) \bigg\}, \label{EcuacionesDiferencialesk1Completo}\\
    &\partial_y k \propto \frac{1}{4f^2} \frac{(x^2\pm 1)}{x^2\pm y^2} \bigg\{ y \Big[ (x^2\pm 1)(\partial_x \mathtt F)^2-(1-y^2) (\partial_y \mathtt F)^2  \Big] \nonumber \\ 
    & \qquad + 2x (1-y^2)(\partial_x \mathtt F)(\partial_y \mathtt F) \bigg\} \label{EcuacionesDiferencialesk2Completo} ,
\end{align}
\end{subequations}
}

\end{appendices}
\bibliographystyle{elsarticle-harv} 
\bibliography{Bibliografia}

\end{document}